\def\BibTeX{{\rm B\kern-.05em{\sc i\kern-.025em b}\kern-.08em
		T\kern-.1667em\lower.7ex\hbox{E}\kern-.125emX}}
\setlist{leftmargin=4.1mm}
\theoremstyle{plain}
\newtheorem{theorem}{Theorem}
\newtheorem{lemma}{Lemma}
\newtheorem{definition}{Definition}
\newtheorem{corollary}[theorem]{Corollary}
\theoremstyle{definition}
\newtheorem{example}{Example}
\newcommand{\bfA}{\mathbf{A}}
\newcommand{\bfH}{\mathbf{H}}
\newcommand{\bfP}{\mathbf{P}}
\newcommand{\bfC}{\mathbf{C}}
\newcommand{\bfS}{\mathbf{S}}
\newcommand{\bfx}{\mathbf{x}}
\newcommand{\bfy}{\mathbf{y}}
\newcommand{\bfz}{\mathbf{z}}
\newcommand{\calB}{\mathcal{B}}
\newcommand{\calC}{\mathcal{C}}
\newcommand{\calI}{\mathcal{I}}
\newcommand{\calE}{\mathcal{E}}
\newcommand{\calS}{\mathcal{S}}
\newcommand{\F}{\mathbb{F}}
\newcommand{\nsubk}[2]{\binom{[#1]}{#2}}
\newcommand\lev[1]{{\color{black}#1}}
\begin{document}
	
	
	\title{Block-MDS QC-LDPC Codes for Information Reconciliation 
		in Key Distribution \vspace{-0.5cm}
	}
	
	
	\author{\IEEEauthorblockN{Lev Tauz, Debarnab Mitra, Jayanth Shreekumar,  Murat Can Sarihan, Chee Wei Wong, and Lara Dolecek}
		\IEEEauthorblockA{Department of Electrical and Computer Engineering, University of California, Los Angeles, USA\\
			email: \{levtauz, debarnabucla, jayshreekumar98, mcansarihan, cheewei.wong, and dolecek\}@ucla.edu
		}\vspace{-1.3cm}}
	
	\maketitle

	\begin{abstract}
		Quantum key distribution (QKD) is a popular protocol that provides information theoretically secure keys to multiple parties. Two important post-processing steps of QKD are 1) the information reconciliation (IR) step, where parties reconcile mismatches in generated keys through classical communication, and 2) the privacy amplification (PA) step, where parties distill their common key into a new secure key that the adversary has little to no information about. In general, these two steps have been abstracted as two distinct problems. In this work, we consider a new technique of performing the IR and PA steps jointly through sampling that relaxes the requirement on the IR step, allowing for more success in key creation. We provide a novel LDPC code construction known as Block-MDS QC-LDPC codes that can utilize the relaxed requirement by creating LDPC codes with pre-defined sub-matrices of full-rank. We demonstrate through simulations that our technique of sampling can provide notable gains in successfully creating secret keys. 
	\end{abstract}
	
	\vspace{-0.2cm}
	\section{Introduction and Motivation}
	\vspace{-0.1cm}
	
	\lev{Quantum communication technologies have already been identified as a valuable component of upcoming 6G systems for both communication and computation \cite{Wang2022, Nguyen2021}.} One important method \lev{in quantum communications} is Quantum Key Distribution (QKD) which allows for secret key agreement between two parties (Alice and Bob) using quantum mechanical principles to guarantee security against eavesdroppers (Eve) \cite{Zhong2015}. QKD is an important tool in a future where quantum computers are threatening to break many of the cryptographic protocols we rely on today and, thus, has received significant research attention \cite{Dolecek2022, Bradler2016,Zhuang2016,Zhang2018,Lee2019}.
	
	QKD can be broken down into 3 major steps: 1) Raw Key Generation: Alice and Bob generate keys from some quantum mechanical source and they have some measure about how much information Eve has about the keys; 2) Information Reconciliation (IR): Due to imperfections in the channel, Alice and Bob must reconcile the errors in their keys by communicating through a classical channel where Eve can eavesdrop; 3) Privacy Amplification (PA): Assuming the IR step was successful, Alice and Bob now distill their common key into a smaller key in order to remove any leaked information that Eve may have. In this paper, we study the interaction between the IR and PA steps in order to improve the overall performance of the QKD system. The main goal is to have a high \textit{secret key rate} which is the expected ratio of the final key length \lev{in bits} over the number of photons used to generate the keys. The secret key rate depends on the success probability of the IR step and the overall information provided to Eve. 
	
	To the best of our knowledge, many previous works have considered each step of the QKD process individually and have abstracted the problem into three separate problems \cite{Bennett1988,Elkouss2009,Mueller2023a}. In this work, we \lev{seek} to break the abstraction between the IR and PA steps in order to relax the requirements of the IR step, thereby allowing it to succeed more often and increase the secret key rate. The key idea of our work is that the PA step will be removing redundant information from the common key reconciled during the IR step. As such, it seems unnecessary for the IR protocol to reconcile all the mismatches if some are redundant and will be removed during the PA step anyway. By requiring the IR step to \textit{reconcile only a subset} of the key instead of the full key (essentially sampling the common key), we increase the probability that the IR step will succeed. This idea is similar in spirit to decoding of only the systematic bits in classical channel coding, which is known to provide \lev{significant} gains.
	
	
	Our contributions are as follows. First, we demonstrate an efficient privacy amplification technique through sampling that causes no information loss under certain practical conditions, thus relaxing the requirements for the IR step. Second, we construct a class of Quasi-Cyclic Low Density Parity Check (QC-LDPC) codes which we term as \textit{Block-MDS} QC-LDPC codes that work jointly with our privacy amplification technique. While designed with QKD in mind, we hypothesize that Block-MDS QC-LDPC codes can have further uses in other areas where LDPC codes are prominent. Finally, we provide simulation results to demonstrate the benefits of our joint IR/PA decoding technique. 
	
	The rest of this paper is organized as follows.  In Section \ref{sec:back}, we provide the preliminaries and the system model. In Section \ref{sec:samp}, we demonstrate our novel sampling technique for privacy amplification. In Section \ref{sec:mds}, we provide the design of our novel Block-MDS QC-LDPC codes.  Finally, we provide simulation results and concluding remarks in Section \ref{sec:sim}.
	
	\textit{Notation:} $\F_q$ denotes a finite field of order $q$. For positive integers $n$ and $m$, $\F_q^n$ ($\F_q^{n,m}$) denotes all vectors (matrices) of length $n$ (size $n\times m$) with elements from $\F_q$. For random variables $X$ and $Y$, $\calI(X;Y)$ denotes the mutual information between $X$ and $Y$ and $H(X)$ denotes the Shannon entropy of $X$. All logarithms are in base $2$. For positive integers $a$ and $b$, let $[a] ={1,2,\dots,a}$ and $(a)_b=a\mod b$. Given two integers $n$ and $k$ such that $k \leq n$, $\nsubk{n}{k}$ denotes all subsets of $[n]$ of size $k$. We shall denote all vectors by lowercase bold letters and matrices by uppercase bold letters. For a vector $\bfx$ (matrix $\bfH$) of size $n$ ($m\times n$) and set $\calS \subset [n]$, we denote $\bfx_{\calS}$ ($\bfH_{\calS}$) as the subset of the elements (columns) of $\bfx$ ($\bfH$) indexed by $\calS$. Let $S_n$ denote the set of all permutations of the set $[n]$.
	\vspace{-0.2cm}
	\section{Background and Model} \label{sec:back}
	\vspace{-0.1cm}
	%
	\subsection{System Model}
	\vspace{-0.1cm}
	As mentioned in the introduction, QKD systems can be broken down into 3 major components: Key Generation, Information Reconciliation, and Privacy Amplification. We shall describe each of these steps and focus on the relevant components of each step.
	
	1) \textit{Key Generation:} Alice and Bob generate raw keys using a quantum communication protocol such as an energy-time entanglement protocol \cite{Zhong2015,Chang2021}. Let $\bfx = \{x_1, \ldots, x_N\}$, $x_i\in \F_q$ and $\bfy = \{y_1, \ldots, y_N\}$, $y_i \in \F_q$ be the raw keys of length $N$ recorded by Alice and Bob, respectively. We assume that the random variables $x_i,i\in[N]$ are independent and uniform on $\F_q$. Due to imperfections in the detectors, the raw keys may differ in some positions. For simplicity, we assume that the symbol mismatch can be modeled by a $q$-ary symmetric channel where the errors are independent, see \cite{Mueller2023a}. As such, the conditional probability for $x_i$ given $y_i$ for $i\in [N]$ is 
	\begin{align}
		&Pr(x_i|y_i) = \begin{cases}
			1-p & y_i = x_i, \\
			\frac{p}{1-q} & \text{else},
		\end{cases} 
	\end{align}
	where $p$ denotes the channel transition probability. Additionally, the adversary Eve may contain some information about the raw keys which we denote as $\calE$.
	
	2) \textit{Information Reconciliation:} In this step, Alice and Bob reconcile the raw keys by communicating through a public channel which Eve has access to. Let $\bfz$ represent the data communicated between Alice and Bob which Eve can access. In this work, we consider single-round communication schemes which are equivalent to asymmetric Slepian-Wolf coding with side information at the receiver \cite{Elkouss2009}. We employ a linear coset scheme where Alice encodes the data $\bfx$ using a matrix $\bfH\in \F_q^{M,N}$ into syndrome $\bfz = \bfH\bfx$ and transmits $\bfz$ to Bob. Bob then uses the syndrome $\bfz$ and the side information $\bfy$ in order to decode $\bfx$. If Bob successfully decodes, then the protocol proceeds to the next step. If Bob fails to decode, then the algorithm stops and no key is generated. 
	
	3) \textit{Privacy Amplification:} In this step,  Alice and Bob start with a common key $\bfx$ since the IR step succeeded. Eve has information about $\bfx$ through $(\calE,\bfz)$ and Alice and Bob wish to distill $\bfx $ into a smaller key which is independent of $(\calE,\bfz)$. PA can be accomplished through the use of \textit{universal hash functions} \cite{Bennett1995}.  The length of the final key depends on the amount of information leaked from $(\calE,\bfz)$. Assuming that the PA step incurs no further information leakage, the final key length can be written as $H(\bfx)-\calI(\bfx;\calE,\bfz) = H(\bfx|\calE,\bfz)$.

	For a key distribution system, we consider the main measure of interest as the average number of generated bits in the final key per photon which is named the \textit{secret key rate}. Thus, the secret key rate can be defined as 
	\begin{align}
		SKR = Pr(A)\frac{H(\bfx)-\calI(\bfx;\calE,\bfz)}{N} = Pr(A)\frac{H(\bfx|\calE,\bfz)}{N}
	\end{align}
	where $A$ is the event that the IR step is successful.
	
	\subsection{LDPC code preliminaries}\label{subsection:ldpc}
	An LDPC code over $\F_q$ is defined by a sparse parity check matrix $\bfH \in \F_q^{M,N}$. For the coset scheme, LDPC codes can be decoded using a variant of the sum-product decoding algorithm specialized for the Slepian-Wolf problem (see \cite{Dupraz2015} for more details). All simulations in this work utilize this decoder.
	
	One method to construct an LDPC code is known as the scaled protograph-based method \cite{Thorpe2003, Dolecek2014}. This method starts with a small bipartite graph represented by a $\gamma\times\kappa$ base matrix of non-negative integers and the parity check matrix of the LDPC code is created by replacing each entry $a$ by a summation of $a$ scaled permutation matrices of size $z\times z$. We denote $\gamma$ as the column weight, $\kappa$ as the row weight, and $z$ as the lifting factor. When the base matrix is the all-ones matrix and the permutation matrices are all circulant shift matrices, then the resultant LDPC code is known as a Type-1 Quasi-Cycli LDPC (QC-LDPC) code \cite{Fossorier2004,Smarandache2012}. For the rest of this paper, we shall focus on these types of codes. Thus, the parity check matrix of QC-LDPC codes can be written as 
	\begin{equation}\label{eq:qc_ldpc}
		\bfH = \begin{bmatrix}
			s_{1,1}\bfC^{p_{1,1}} & s_{1,2}\bfC^{p_{1,2}} &\cdots & s_{1,\kappa}\bfC^{p_{1,\kappa}} \\
			s_{2,1}\bfC^{p_{2,1}} & s_{2,2}\bfC^{p_{2,2}} & \cdots & s_{2,\kappa}\bfC^{p_{2,\kappa}} \\			
			\vdots & & \ddots & \vdots\\
			s_{\gamma,1}\bfC^{p_{\gamma,1}} & s_{\gamma,2}\bfC^{p_{\gamma,2}} & \cdots & s_{\gamma,\kappa}\bfC^{p_{\gamma,\kappa}} \\			
		\end{bmatrix}
	\end{equation}
	where $\bfC^{p}$ is a circulant shift matrix (CSM) of size $z\times z$ with a one at column $r-p \mod z$ for row $r,0\leq r \leq z-1$ and zero elsewhere.
	We note that $\bfH$ can be uniquely determined by the scaling matrix $\bfS= \{s_{i,j}\}_{i\in[\gamma], j\in[\kappa]}, s_{i,j} \in \F_q$ and power matrix $\bfP = \{p_{i,j}\}_{i\in[\gamma], j\in[\kappa]}, 0 \leq p_{i,j}\leq z-1$.
	
	An important measure for LDPC codes is the girth, which is the length of the shortest cycle in the graph of the LDPC code. A necessary and sufficient condition for a QC-LDPC code to \lev{have a} certain girth is given in the following lemma:
	\begin{lemma}\cite{Fossorier2004} \label{lemma:fossorier}
		A QC-LDPC code in the form of Eq.\eqref{eq:qc_ldpc} has girth at least $2(g+1)$ if and only if
		\begin{equation}\label{eq:girth_cond}
			\sum_{k=1}^m p_{i_k,j_k} - p_{i_{k+1},j_k} \neq 0 \mod z
		\end{equation}
		for all $m$, $2 \leq m \leq g$, all $i_k$, $i\in[\gamma]$, and all $j_k$, $j\in[\kappa]$ with $i_1 = i_{m}$, $i_k \neq i_{k+1}$, and $j_k\neq j_{k+1}$.
	\end{lemma}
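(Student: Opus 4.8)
The plan is to prove both directions of the biconditional by analyzing cycles in the Tanner graph of the QC-LDPC code directly in terms of the power matrix $\bfP$. First I would fix the structure of a cycle. A cycle of length $2m$ in the Tanner graph alternates between check nodes and variable nodes. Because of the quasi-cyclic structure, each such cycle corresponds to a closed sequence of block positions: a sequence of block rows $i_1, i_2, \ldots, i_m$ (with $i_1 = i_m$ returning to start) and block columns $j_1, \ldots, j_m$, where consecutive edges live in CSM blocks. The constraints $i_k \neq i_{k+1}$ and $j_k \neq j_{k+1}$ encode that the walk genuinely moves to a new check and new variable block at each step (so we do not backtrack along the same block), and $2 \le m \le g$ covers all potential short cycles up to length $2g$.

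The key technical step is to translate ``a cycle closes up'' into an arithmetic condition on the powers $p_{i_k,j_k}$. Within a single CSM block $\bfC^{p}$, the unique nonzero entry in row $r$ sits at column $r - p \bmod z$; equivalently, moving from a variable node at intra-block position $c$ to its incident check node imposes $r = c + p \bmod z$, and moving back imposes a subtraction. Tracing a candidate cycle through the blocks and demanding that the intra-block index returns to its starting value after traversing all $m$ steps yields exactly a telescoping sum of the form $\sum_{k=1}^{m}\left(p_{i_k,j_k} - p_{i_{k+1},j_k}\right) \equiv 0 \pmod z$. Thus a genuine cycle of length $2m$ exists \emph{if and only if} this sum vanishes modulo $z$ for some valid choice of the index sequence. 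I would carry this out by writing the position-update rule along each edge and composing the updates around the closed walk, being careful that the scalars $s_{i,j}$ do not affect the \emph{support} of the matrix and hence are irrelevant to cycle existence (only the powers matter).

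Given this equivalence, the two directions follow cleanly. For the contrapositive of the ``if'' direction: if the girth is less than $2(g+1)$, there is a cycle of length $2m$ for some $m \le g$, and by the translation above its index sequence satisfies $\sum_{k=1}^{m} p_{i_k,j_k} - p_{i_{k+1},j_k} \equiv 0 \pmod z$, violating \eqref{eq:girth_cond}. Conversely, if \eqref{eq:girth_cond} fails for some admissible $m, \{i_k\}, \{j_k\}$, then reversing the construction produces an explicit closed walk of length $2m$; one must check it is a genuine cycle (no repeated nodes) rather than a degenerate walk, which is guaranteed by the conditions $i_k \neq i_{k+1}$ and $j_k \neq j_{k+1}$ together with $i_1 = i_m$.

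I expect the main obstacle to be the combinatorial bookkeeping in the forward construction: verifying that a violating index sequence actually yields a \emph{simple} cycle rather than a walk that revisits a node, and confirming that all short cycles are captured by sequences meeting the stated constraints. This is precisely where the non-backtracking conditions $i_k \neq i_{k+1}$, $j_k \neq j_{k+1}$ and the wraparound $i_1 = i_m$ earn their place. Since this is a known result \cite{Fossorier2004}, I would either cite the original argument or reproduce the index-chasing carefully; the arithmetic itself is routine once the position-update rule for the CSMs is set up correctly.
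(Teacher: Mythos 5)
First, note that the paper itself contains no proof of this lemma: it is imported verbatim from \cite{Fossorier2004}, so your proposal can only be measured against the standard argument, whose outline you do reproduce correctly — the per-block position-update rule ($\bfC^{p}$ sends intra-block column $c$ to row $r = c + p \bmod z$), the telescoping sum as the algebraic record of a closed walk, and the observation that the scalars $s_{i,j}$ (nonzero for a Type-1 code) do not affect the support and hence are irrelevant to girth. Your first direction is sound: a simple cycle of length $2m$ yields an admissible index sequence, because each block is a permutation matrix, so two distinct variable (check) neighbors of a check (variable) node must lie in different block columns (rows), which is exactly what forces $j_k \neq j_{k+1}$ and $i_k \neq i_{k+1}$, and closure of the cycle forces the sum in \eqref{eq:girth_cond} to vanish modulo $z$.

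The genuine gap is in the converse, at precisely the step you flag as the main obstacle and then dispose of incorrectly: the conditions $i_k \neq i_{k+1}$, $j_k \neq j_{k+1}$, and the wrap-around do \emph{not} guarantee that the walk built from a violating sequence is a simple cycle. Concretely, if the code has a $4$-cycle, i.e., $p_{1,1} - p_{2,1} + p_{2,2} - p_{1,2} \equiv 0 \pmod{z}$, then the $m=4$ sequence $(i_k,j_k) = (1,1),(2,2),(1,1),(2,2)$ satisfies every stated condition and has sum $\equiv 0$, yet its walk traverses the same four Tanner-graph nodes twice; it is a closed non-backtracking walk, not a cycle of length $8$. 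So ``violating sequence $\Rightarrow$ simple cycle of length $2m$'' is false as you asserted it. What the proof actually needs is weaker: a closed non-backtracking walk of length $2m$ \emph{contains} a simple cycle of length at most $2m$. That requires its own short argument — for instance, take a shortest closed non-backtracking walk in the graph; if it had a repeated vertex, splitting at that vertex and trimming matched edge pairs either produces a strictly shorter closed non-backtracking walk (contradicting minimality) or forces a backtrack at an interior vertex of the original walk (contradicting its non-backtracking property); hence the shortest such walk is a simple cycle. Since $2m \leq 2g < 2(g+1)$, this still delivers girth $< 2(g+1)$, and both contrapositives then close. Without this minimality/trimming step (or an explicit citation covering it), your converse direction does not go through.
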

	
	Finally, we note that a matrix of size $m\times n$ with $m\leq n$ is considered Maximum-Distance Separable (MDS) if and only if every square submatrix of size $m\times m$ is full-rank.
	\section{Privacy Amplification with Sampling}  \label{sec:samp}
	
	In this section, we demonstrate how we can achieve privacy amplification by sampling the decoded sequence $\bfx$ under certain conditions. The benefit of this is that the IR decoder only needs to decode a certain subset of $\bfx$ which has a higher probability of success than fully decoding $\bfx$. We term the decoder that decodes the full $\bfx$ as the \textit{full codeword} (FC) decoder and the decoder that decodes a part of $\bfx$ as the \textit{subset codeword} (SC) decoder. We formally define the SC decoder as follows:
	
	\begin{definition}
		Given a set $\calS \subseteq [N]$, the SC decoder takes $\bfx_\calS$ from the IR step and inputs it into the PA step. As such, the secret key rate can be written as 
		\begin{align}
			SKR = Pr(\widetilde{A})\frac{H(\bfx_\calS)-\calI(\bfx_\calS;\calE,\bfz)}{N} 
		\end{align}
		where $\widetilde{A}$ is the event that $\bfx_{\calS}$ is decoded successfully in IR. 
	\end{definition}
	
	The following theorem provides sufficient conditions when the SC decoder \lev{cannot have a lower} secret key rate than the FC decoder. 
	
	\begin{theorem}\label{theorem:sampling_eve}
		Assume that there exists a set $\calS \subset [N], |\calS| = N-M$ such that the submatrix $\bfH_{\overline{\calS}}$ is full rank. Thus, we can write $\bfz = \bfH\bfx = \bfH_{\calS}\bfx_{\calS} + \bfH_{\overline{\calS}}\bfx_{\overline{\calS}}$. Additionally, assume that all the random variables $x_i,i\in[N]$ are conditionally independent given Eve's information $\calE$. \lev{If $SKR_1$ and $SKR_2$ are the secret key rates of the FC decoder and SC decoder, respectively, then $SKR_1 \leq SKR_2$.}
	\end{theorem}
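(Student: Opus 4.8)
The plan is to reduce the claimed inequality to two independent facts: that the privacy-amplified key length is \emph{identical} for the two decoders, and that the success probability of the SC decoder \emph{dominates} that of the FC decoder. Writing out the definitions, $SKR_1 = Pr(A)\,H(\bfx|\calE,\bfz)/N$ and $SKR_2 = Pr(\widetilde{A})\,H(\bfx_\calS|\calE,\bfz)/N$, so because Shannon entropy is non-negative it suffices to show (i) $H(\bfx|\calE,\bfz) = H(\bfx_\calS|\calE,\bfz)$ and (ii) $Pr(A) \le Pr(\widetilde{A})$, and then multiply the two.

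For (i), the crux is that the discarded coordinates $\bfx_{\overline{\calS}}$ carry no secret randomness once the public syndrome $\bfz$ is known. Using the decomposition $\bfz = \bfH_{\calS}\bfx_{\calS} + \bfH_{\overline{\calS}}\bfx_{\overline{\calS}}$ together with the hypothesis that $\bfH_{\overline{\calS}}$ is full rank (hence invertible, since $|\overline{\calS}| = M$ and $\bfH_{\overline{\calS}}$ is $M\times M$), I would write $\bfx_{\overline{\calS}} = \bfH_{\overline{\calS}}^{-1}(\bfz - \bfH_{\calS}\bfx_{\calS})$, exhibiting $\bfx_{\overline{\calS}}$ as a deterministic function of $(\bfx_{\calS},\bfz)$. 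The chain rule then yields $H(\bfx|\calE,\bfz) = H(\bfx_{\calS}|\calE,\bfz) + H(\bfx_{\overline{\calS}}|\bfx_{\calS},\calE,\bfz)$, and the second term is zero: conditioning on $(\bfx_{\calS},\bfz)$ already determines $\bfx_{\overline{\calS}}$, and the further conditioning on $\calE$ cannot raise an already-vanishing entropy. Hence the retained key length equals the full key length. The conditional-independence hypothesis on $\{x_i\}$ given $\calE$ enters to guarantee that these two quantities are the genuinely \emph{achievable} outputs of the hashing-based PA step for both decoders, so that the comparison is between attainable rates rather than mere bounds.

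For (ii), I would argue at the level of the decoder's estimate $\widehat{\bfx}$. By definition $A = \{\widehat{\bfx} = \bfx\}$ whereas $\widetilde{A} = \{\widehat{\bfx}_{\calS} = \bfx_{\calS}\}$, so $A \subseteq \widetilde{A}$ (an estimate correct on all of $[N]$ is in particular correct on $\calS$), and monotonicity of probability gives $Pr(A) \le Pr(\widetilde{A})$. Combining (i) and (ii),
\begin{align}
    SKR_1 = Pr(A)\frac{H(\bfx_{\calS}|\calE,\bfz)}{N} \le Pr(\widetilde{A})\frac{H(\bfx_{\calS}|\calE,\bfz)}{N} = SKR_2 .
\end{align}

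The step I expect to be the main obstacle is the rigorous justification of (i) in the presence of Eve's side information $\calE$: one must confirm that the algebraic determinism of $\bfx_{\overline{\calS}}$ given $(\bfx_{\calS},\bfz)$ is untouched by the extra conditioning on $\calE$, and that the stated achievable key lengths remain valid for both decoders under the same conditional-independence assumption, so the comparison is truly apples-to-apples. Step (ii) is comparatively routine once the success events are pinned down, though it does rely on the convention that the SC and FC decoders produce the same estimate and differ only in the correctness criterion applied to it.
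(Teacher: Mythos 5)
Your proof is correct, and its key-length step takes a genuinely different route from the paper's. For the success probabilities you argue exactly as the paper does: $A \subseteq \widetilde{A}$, hence $Pr(A) \le Pr(\widetilde{A})$. For the key lengths, the paper expands both quantities in the form $H(\cdot\,|\calE) - H(\bfz|\calE) + H(\bfz|\cdot\,,\calE)$ and then invokes the conditional-independence hypothesis twice: once to replace $H(\bfH_{\overline{\calS}}\bfx_{\overline{\calS}}|\bfx_\calS,\calE)$ by $H(\bfH_{\overline{\calS}}\bfx_{\overline{\calS}}|\calE) = H(\bfx_{\overline{\calS}}|\calE)$, and once more to reassemble $H(\bfx_\calS|\calE) + H(\bfx_{\overline{\calS}}|\calE) = H(\bfx|\calE)$. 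You instead keep $\bfz$ in the conditioning and apply the chain rule directly, $H(\bfx|\calE,\bfz) = H(\bfx_\calS|\calE,\bfz) + H(\bfx_{\overline{\calS}}|\bfx_\calS,\calE,\bfz)$, and kill the second term by inverting $\bfH_{\overline{\calS}}$: since $\bfx_{\overline{\calS}} = \bfH_{\overline{\calS}}^{-1}(\bfz - \bfH_\calS\bfx_\calS)$ is a deterministic function of $(\bfx_\calS,\bfz)$, the conditional entropy is zero, and additional conditioning on $\calE$ cannot raise it. This is both more elementary and strictly more general: your argument never uses the conditional independence of the $x_i$ given $\calE$, so you have effectively shown that hypothesis is unnecessary for the theorem (indeed, in the paper's own proof the two uses cancel: one could keep $H(\bfx_{\overline{\calS}}|\bfx_\calS,\calE)$ throughout and close the argument with the plain chain rule). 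What the paper's route buys is the explicit identification of the common key length as $H(\bfx|\calE) - H(\bfz|\calE)$, an interpretable ``source entropy minus leaked syndrome entropy'' expression; your route buys brevity and a weaker hypothesis. One small caveat: your suggestion that the independence assumption is needed to make the key lengths ``achievable'' outputs of the hashing step is speculation the paper does not substantiate --- the paper simply defines the key lengths by these entropy expressions --- so no apples-to-apples issue actually arises, and your proof stands without that remark.
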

	\begin{proof}
		First, we note that the probability of success for the FC decoder is clearly \lev{not higher} than the probability of success for the SC decoder since the event that $\bfx$ is correctly decoded is encompassed in the event that $\bfx_{\calS}$ is decoded. Thus, $Pr(\widetilde{A}) \geq Pr(A)$.
		Next, we note that 
		\begin{equation}\label{eq_skr1}
			\begin{split}
				&H(\bfx)-\calI(\bfx;\calE,\bfz) \stackrel{(a)}{=} H(\bfx) - \calI(\bfx;\calE) - \calI(\bfx;\bfz| \calE)  \\
				&= H(\bfx) - \left(H(\bfx) - H(\bfx| \calE)\right) - \left(H(\bfz|\calE) - H(\bfz| \bfx,\calE)\right) \\ 
				&= H(\bfx| \calE) - H(\bfz|\calE) + H(\bfz| \bfx,\calE) \\
				&\stackrel{(b)}{=} H(\bfx| \calE) - H(\bfz|\calE)
			\end{split}	 
		\end{equation}
		where (a) uses the chain rule for mutual information and (b) uses the fact that $H(\bfz| \bfx,\calE) = 0$ due to $\bfz$ being a deterministic function of $\bfx$. We can use a similar logic for the following:
		\begin{equation}
			H(\bfx_{\calS})-\calI(\bfx_{\calS};\calE,\bfz) =H(\bfx_{\calS}| \calE) - H(\bfz|\calE) + H(\bfz| \bfx_{\calS},\calE) .
		\end{equation}
		We note that 
		\begin{align}
			H(\bfz| \bfx_{\calS},\calE) &= H(\bfH_{\calS}\bfx_{\calS} + \bfH_{\overline{\calS}}\bfx_{\overline{\calS}}| \bfx_{\calS},\calE) \stackrel{(a)}{=}  H(\bfH_{\overline{\calS}}\bfx_{\overline{\calS}}| \bfx_{\calS},\calE) \nonumber \\
			&\stackrel{(b)}{=} H(\bfH_{\overline{\calS}}\bfx_{\overline{\calS}}| \calE)
			\stackrel{(c)}{=} H(\bfx_{\overline{\calS}}|\calE)
		\end{align}
		where $(a)$ arises from removing the contribution of $\bfx_{\calS}$ in $\bfz$, (b) comes from the conditional independence of the r.v. in $\bfx$ when conditioned on $\calE$, and (c) comes from the fact that $\bfH_{\overline{\calS}}$ is a square full rank matrix and, thus, a bijective operation that preserves entropy. Thus, we have 
		\begin{align}
			&H(\bfx_{\calS})-\calI(\bfx_{\calS};\calE,\bfz) = H(\bfx_{\calS}| \calE) - H(\bfz|\calE) + H(\bfx_{\overline{\calS}}|\calE) \nonumber \\
			&\stackrel{(a)}{=} H(\bfx| \calE) - H(\bfz|\calE) \stackrel{(b)}{=} H(\bfx)-\calI(\bfx;\calE,\bfz)
		\end{align}
		where (a) arises from the conditional independence of $\bfx$ when conditioned on $\calE$ which results in $H(\bfx| \calE) = H(\bfx_{\calS}| \calE) + H(\bfx_{\overline{\calS}}|\calE)$ and (b) comes from Eq. \eqref{eq_skr1}.
		
		Thus, we have proven that the final key lengths are the same and that the probability of success of the SC decoder is \lev{not lower than for the FC decoder} which guarantees $SKR_1 \leq SKR_2$.
	\end{proof}

	The key idea of Theorem \ref{theorem:sampling_eve} is that carefully sampling $\bfx$ allows us to use the entropy of the leftover bits to increase privacy despite the reconciled vector $\bfx_\calS$ being smaller. In total, the final key length is the same for both decoders. The proposed approach relaxes the success condition for the IR step. Additionally, the proof of Theorem \ref{theorem:sampling_eve} did not rely on $\calS$ being the only set with this property. We can thus generalize the SC decoder to decoding at least one of multiple subsets with the full rank property. The following definition provides a description of this decoder:
	\begin{definition}
		Let $\mathbb{S} = \{\calS_i: i \in[k]\}$ be a set of $k$  subsets of $[n]$ that are possibly non-disjoint. The \textit{multiple subset codeword} (MSC) decoder samples the subset $\bfx_{\calS_i}$ with the highest secret key rate as defined by
		\begin{align}
			SKR_i = Pr(\widetilde{A}_i)\frac{H(\bfx_X)-\calI(\bfx_X;\calE,\bfz)}{N} , i\in [k]
		\end{align}
		where $\widetilde{A}_i$ is the event that $\bfx_{\calS_i}$ is decoded successfully in IR. 
	\end{definition}
	
	Thus, we get the following corollary of Theorem \ref{theorem:sampling_eve} for the MSC decoder.
	\begin{corollary}\label{corollary:sampling}
		If $|\calS| = N-M$ and $\bfH_{\overline{\calS}}$ is full rank for every $\calS \in \mathbb{S}$, then the MSC decoder achieves a secret key rate \lev{that is equal to or greater than the secret key rate of an} SC decoder for any particular $\calS \in \mathbb{S}$.
	\end{corollary}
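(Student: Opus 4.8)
The plan is to reduce the corollary to a pointwise application of Theorem~\ref{theorem:sampling_eve} together with an elementary monotonicity argument on the decoding success probabilities. First I would verify that every $\calS \in \mathbb{S}$ meets the hypotheses of Theorem~\ref{theorem:sampling_eve}: by assumption $|\calS| = N-M$ and $\bfH_{\overline{\calS}}$ is full rank for each such $\calS$, and the conditional independence of the $x_i$ given $\calE$ is inherited from the ambient model. Applying the theorem to each subset then yields
\begin{equation}
H(\bfx_{\calS}) - \calI(\bfx_{\calS};\calE,\bfz) = H(\bfx) - \calI(\bfx;\calE,\bfz) = H(\bfx|\calE,\bfz)
\end{equation}
for every $\calS \in \mathbb{S}$. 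The decisive consequence is that the final key length is the \emph{same} constant $L := H(\bfx|\calE,\bfz)$ no matter which subset is reconciled, so the only quantity distinguishing the individual rates $SKR_i$ is the success probability $Pr(\widetilde{A}_i)$.

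Second, I would pin down the success event of the MSC decoder. Since it outputs a key whenever at least one of its subsets is successfully reconciled, and every reconciled subset delivers the identical length $L$, its success event is the union $\bigcup_{i=1}^{k}\widetilde{A}_i$. Monotonicity of probability immediately gives $Pr(\bigcup_{i=1}^{k}\widetilde{A}_i) \geq Pr(\widetilde{A}_j)$ for every $j \in [k]$. Combining this with the common key length, the MSC secret key rate satisfies
\begin{equation}
SKR_{\mathrm{MSC}} = Pr\!\left(\bigcup_{i=1}^{k}\widetilde{A}_i\right)\frac{L}{N} \geq Pr(\widetilde{A}_j)\frac{L}{N} = SKR_j,
\end{equation}
which is precisely the rate of the SC decoder on the particular subset $\calS_j$. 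Since $j$ was arbitrary, the claimed inequality holds for any $\calS \in \mathbb{S}$.

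The step I expect to require the most care --- and the main obstacle --- is justifying that the data-dependent choice of which reconciled subset to forward to the PA step does not create additional leakage or disturb the entropy bookkeeping, so that the per-subset leakage-free length $L$ established in Theorem~\ref{theorem:sampling_eve} genuinely carries over to the MSC decoder. The resolution I would give is that the MSC decoder only ever forwards one of the subsets in $\mathbb{S}$, each of which is individually leakage-free and of length $L$ by the theorem; hence, conditioned on the selected subset, the key length is $L$ in every case, and the union argument above accounts for the remaining randomness in the selection purely through the success probabilities. Once this reduction is in place, the monotonicity bound closes the proof.
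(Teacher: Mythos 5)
Your proposal is correct and follows the route the paper intends: the corollary carries no separate proof precisely because, as you verify, Theorem~\ref{theorem:sampling_eve} applies verbatim to every $\calS \in \mathbb{S}$, forcing every subset's final key length to equal the common constant $H(\bfx|\calE,\bfz)$, after which the comparison of rates reduces to a one-line monotonicity argument on the success probabilities. The only cosmetic difference is that you model the MSC success event as the union $\bigcup_{i}\widetilde{A}_i$, whereas under the paper's literal definition the MSC decoder selects the subset maximizing $SKR_i$, so the inequality is immediate from $\max_i SKR_i \geq SKR_j$; both readings rest on the same key fact (equal key lengths from Theorem~\ref{theorem:sampling_eve}) and yield the claim.
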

	
	In the sequel, we assume that $\mathbb{S}$ satisfies Corollary \ref{corollary:sampling} whenever we discuss the MSC decoder. We note that the MSC decoder works naturally with any probability-based decoder, such as the belief propagation decoder of LDPC codes that can output a subset with the highest probability of being correct. In the next section, we demonstrate how to construct codes that can be utilize the MSC decoder.

	\section{Block-MDS QC-LDPC Codes} \label{sec:mds}
	
	In this section, we demonstrate how to construct QC-LDPC codes for the MSC decoder. In theory, we could randomly sample an LDPC code from a code ensemble and find all the square full rank submatrices of the parity check matrix. Yet, this approach would be quite difficult to analyze since the number of full rank submatrices can differ between samples. As such, we turn towards structured codes such as QC-LDPC codes and devise construction methods that guarantee certain subsets have the full rank property. We formally define this notion as follows: 
	
	\begin{definition}\label{def:block_mds}
		A QC-LDPC code is \textbf{Block-MDS} if all the sub-matrices $\bfH_{\calS_{\calB}}, \calB  \in \nsubk{\kappa}{\gamma}$  where $\calS_{\calB} \triangleq  \{(i-1)\times z+(j-1) : i\in \calB, j\in [z]\}$ where $\kappa$ is the row weight, $\gamma$ is the column weight, and $z$ is the lifting factor.
	\end{definition}
	
	At a high level, a Block-MDS QC-LDPC code guarantees that every square submatrix that corresponds to the lifting of a $\gamma \times \gamma$ submatrix in the parity check matrix of the protograph is full-rank. This is conceptually similar to an MDS matrix where every square submatrix is full rank but instead we focus on the lifted block matrices being full rank. As such, the MSC decoder subsets for the Block-MDS code are $\mathbb{S} = \{\overline{\calS_{\calB}}: \calB  \in \nsubk{\kappa}{\gamma}\}$. Example \ref{example:block_mds} demonstrates Definition \ref{def:block_mds}.
	
	
	\begin{example}\label{example:block_mds}
		Consider the following parity check matrix of a QC-LDPC code with $(\gamma,\kappa) = (2,3)$ (see Section \ref{subsection:ldpc}):
		\begin{equation}
			\bfH = \begin{bmatrix}
				s_{1,1}\bfC^{p_{1,1}} & s_{1,2}\bfC^{p_{1,2}} & s_{1,3}\bfC^{p_{1,3}} \\
				s_{2,1}\bfC^{p_{2,1}} & s_{2,2}\bfC^{p_{2,2}} & s_{2,3}\bfC^{p_{2,3}} \\			
			\end{bmatrix}.
		\end{equation}
		$\bfH$ is Block-MDS if the following submatrices are full rank
		\begin{align*}
			&\bfH_{\calS_{1,2}} = \begin{bmatrix}
				s_{1,1}\bfC^{p_{1,1}} & s_{1,2}\bfC^{p_{1,2}}  \\
				s_{2,1}\bfC^{p_{2,1}} & s_{2,2}\bfC^{p_{2,2}}\\			
			\end{bmatrix}, \\ 
			&\bfH_{\calS_{1,3}} = \begin{bmatrix}
				s_{1,1}\bfC^{p_{1,1}} & s_{1,3}\bfC^{p_{1,3}} \\
				s_{2,1}\bfC^{p_{2,1}} & s_{2,3}\bfC^{p_{2,3}} \\			
			\end{bmatrix},\\
			&\bfH_{\calS_{2,3}} = \begin{bmatrix}
				s_{1,2}\bfC^{p_{1,2}} & s_{1,3}\bfC^{p_{1,3}} \\
				s_{2,2}\bfC^{p_{2,2}} & s_{2,3}\bfC^{p_{2,3}} \\	
			\end{bmatrix}.
		\end{align*}
	\end{example}
	
	By focusing on Block-MDS QC-LDPC codes, we can significantly simplify the design of LDPC codes that can utilize the MSC decoder. For the rest of this section, we shall investigate techniques to construct Block-MDS QC-LDPC codes. We first state an important result in linear algebra that we rely on extensively in this paper:
	
	\begin{lemma}{\cite[Theorem 1]{Silvester2000}} \label{lemma:det}
		Let $\mathcal{R}$ be a commutative subring of $\F_q^{z, z}$, i.e., $\mathcal{R}$ is a set of matrices of size $z\times z$ that form a commutative ring with the standard operations of matrix addition and multiplication. Let $\mathbf{M} \in \mathcal{R}^{a\times b}$, i.e. $\mathbf{M}$ is a block matrix where each block is an element in $\mathcal{R}$. Then, 
		\begin{equation}
			\det_{\F_q}(\mathbf{M}) = \det_{\F_q}(\det_{\mathcal{R}}(\mathbf{M})),
		\end{equation}
		where $\det_{F}$ is the determinant function over a ring $F$.
	\end{lemma}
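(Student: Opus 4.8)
The plan is to prove the identity by simultaneously upper-triangularizing every block of $\mathbf{M}$ and then reading off both sides of the claimed equality from the diagonals. (Note that the determinant only makes sense when $\mathbf{M}$ is square as a block matrix, so I take the block dimensions to be $a\times a$.) Since the two determinants are equalities of polynomials in the entries of the blocks and determinants are preserved under field extension, I would first pass to the algebraic closure $\overline{\F_q}$; establishing the statement there implies it over $\F_q$.

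The central tool is simultaneous triangularization. Because $\mathcal{R}$ is a commutative ring of matrices, its elements pairwise commute, so $\mathcal{R}$ is a commuting family in $\overline{\F_q}^{z,z}$. A standard argument (a commuting family over an algebraically closed field shares a common eigenvector, followed by induction on $z$) produces a single invertible $\mathbf{T}\in\overline{\F_q}^{z,z}$ that upper-triangularizes every element of $\mathcal{R}$ at once. I would then conjugate $\mathbf{M}$ by the block-diagonal matrix $\operatorname{diag}(\mathbf{T},\ldots,\mathbf{T})$ ($a$ copies). This leaves $\det_{\F_q}(\mathbf{M})$ unchanged and replaces each block $M_{ij}$ by the upper triangular $\mathbf{T}M_{ij}\mathbf{T}^{-1}$; simultaneously, since conjugation by $\mathbf{T}$ is an algebra automorphism of $\overline{\F_q}^{z,z}$, it sends $\mathbf{D}:=\det_{\mathcal{R}}(\mathbf{M})$ to the upper triangular $\mathbf{T}\mathbf{D}\mathbf{T}^{-1}$, preserving $\det_{\F_q}(\mathbf{D})$.

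Once every block is upper triangular, both quantities decouple across the $z$ diagonal layers. For each $k\in[z]$, let $\mathbf{N}_k$ be the $a\times a$ matrix whose $(i,j)$ entry is the $(k,k)$ diagonal entry of the conjugated block $\mathbf{T}M_{ij}\mathbf{T}^{-1}$. Reindexing the $az$ coordinates by grouping them according to the within-block position $k$ turns the conjugated $\mathbf{M}$ into a block-upper-triangular matrix with diagonal blocks $\mathbf{N}_1,\ldots,\mathbf{N}_z$; as this reindexing is a simultaneous row/column permutation it leaves the determinant unchanged, giving $\det_{\F_q}(\mathbf{M})=\prod_{k=1}^{z}\det(\mathbf{N}_k)$. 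On the other side, the Leibniz expansion $\mathbf{D}=\sum_{\sigma\in S_a}\operatorname{sgn}(\sigma)\prod_i M_{i,\sigma(i)}$ is compatible with extracting $(k,k)$ diagonal entries of upper triangular matrices (the $(k,k)$ entry of a product of upper triangular matrices is the product of their $(k,k)$ entries, and likewise for sums), so the $(k,k)$ entry of $\mathbf{T}\mathbf{D}\mathbf{T}^{-1}$ is exactly $\det(\mathbf{N}_k)$; hence $\det_{\F_q}(\mathbf{D})=\prod_{k=1}^{z}\det(\mathbf{N}_k)$ as well, and the two sides agree.

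The main obstacle is the simultaneous triangularization step: it forces the passage to an algebraically closed field and relies on the common-eigenvector argument for a commuting family, and one must verify that conjugation by the \emph{same} $\mathbf{T}$ linearizes both the ring-determinant computation and the layer-by-layer factorization. The accompanying bookkeeping of the coordinate reindexing — checking that upper-triangularity of every block is precisely what renders the reordered matrix block-upper-triangular — is routine but is where care is needed.
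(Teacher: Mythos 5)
Your proof is correct, but note that the paper offers no proof of this lemma at all: it is imported verbatim, with citation, from Silvester's article, so the relevant comparison is with that source rather than with anything in the paper. Silvester's argument is an elementary induction on the block dimension $a$: commutativity of the blocks permits block row operations that reduce to a block-triangular matrix (in the $2\times 2$ case this is the familiar cancellation $\det\mathbf{M}\cdot\det\mathbf{D}=\det(\mathbf{A}\mathbf{D}-\mathbf{B}\mathbf{C})\cdot\det\mathbf{D}$), and singular pivot blocks are handled by adjoining an indeterminate ($\mathbf{D}\mapsto\mathbf{D}+x\mathbf{I}$) together with a polynomial-identity argument, so the whole proof stays over the ground field. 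Your route via simultaneous triangularization is genuinely different, and its steps are sound: both sides of the identity lie in $\F_q$, so equality may be checked after extension to $\overline{\F_q}$; a commuting family over an algebraically closed field is simultaneously upper-triangularizable; conjugating by $\mathrm{diag}(\mathbf{T},\dots,\mathbf{T})$ preserves $\det_{\F_q}(\mathbf{M})$ and, conjugation being a ring isomorphism, transports the Leibniz expansion of $\det_{\mathcal{R}}(\mathbf{M})$ (which lies in $\mathcal{R}$, hence is itself triangularized by $\mathbf{T}$) to the conjugated blocks; the regrouping permutation acts on rows and columns simultaneously, so its sign contributions cancel; and the diagonal bookkeeping (the $(k,k)$ entry of a sum of products of upper triangular matrices equals the corresponding signed sum of products of $(k,k)$ entries) correctly yields $\prod_{k=1}^{z}\det(\mathbf{N}_k)$ for both quantities. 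The trade-off: Silvester's induction is elementary, never leaves $\F_q$, and adapts readily to more general coefficient rings, while your argument is non-inductive and more structural---it exposes why the identity holds, namely that both determinants factor identically across the $z$ eigenvalue ``layers'' of the commuting algebra---at the cost of invoking the simultaneous-triangularization theorem and the passage to $\overline{\F_q}$.
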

	
	Consider the set $\calC \subset \F_q^{z,z}$ as the set of all circulant matrices of size $z\times z$ with elements in the field $\F_q$. It is well known that $\calC$ is a commutative ring in regards to operations of the standard matrix addition and multiplication \cite[Theorem 7.3.2]{Hachenberger2020a}. Since a QC-LDPC code is a block matrix consisting of CSMs, Lemma \ref{lemma:det} states that a necessary and sufficient condition for the QC-LDPC code to be Block-MDS is that it satisfies 
	\begin{equation}
		\det_{\F_q}\left(\sum_{\sigma \in S_{\gamma}}  sign(\sigma) \prod^{\gamma}_{i=1} s_{\sigma(i),\tau(i)}\bfC^{p_{\sigma(i),\tau(i)}}\right) \neq 0, \;  \forall \tau \in \nsubk{\kappa}{\gamma},\label{eq:det_condition}
	\end{equation}
	where we have expressed the determinant function using the well-known Leibniz formula and $sign(\sigma)$ is the parity of the permutation $\sigma$. Note that the inner sum must be a circulant due to $\calC$ being a commutative ring. Thus, the Block-MDS condition can be checked for a particular QC-LDPC code by whether $\binom{\kappa}{\gamma}$ circulant matrices of size $z\times z$ are singular. The direct way would be to take the determinant of each circulant matrix in the field $\F_q$. For circulant matrices, there is a much easier check for singularity. First, let us define the associated polynomial of a circulant matrix as $f(x) = \sum_{i=0}^{z-1}a_ix^i$ where $a_i$ is the $i^{\text{th}}$ element in the first column of the circulant matrix. The following lemma provides a simple condition to check whether a circulant matrix is singular \cite{Ingleton1956,Fabsic2017}:
	
	\begin{lemma}\label{lemma:circ_root}
		Let $f(x)$ be the associated polynomial of a circulant matrix $\bfA \in \F_q^{z,z}$. Then, $\bfA$ is non-singular if and only if $\gcd(f(x),x^z-1) = 1 $.
	\end{lemma}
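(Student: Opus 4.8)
The plan is to relate the singularity of a circulant matrix $\bfA$ to a polynomial condition by exploiting the well-known correspondence between the ring $\calC$ of $z\times z$ circulant matrices over $\F_q$ and the quotient polynomial ring $\F_q[x]/(x^z-1)$. First, I would recall that the map sending the circulant matrix $\bfA$ with first column $(a_0,a_1,\dots,a_{z-1})$ to its associated polynomial $f(x)=\sum_{i=0}^{z-1}a_i x^i \bmod (x^z-1)$ is a ring isomorphism $\calC \cong \F_q[x]/(x^z-1)$. Under this isomorphism, matrix multiplication corresponds to polynomial multiplication modulo $x^z-1$, since the basic circulant shift matrix $\bfC$ maps to $x$ and satisfies $\bfC^z=\bfI$, mirroring $x^z \equiv 1$.

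The key observation is then that $\bfA$ is non-singular as a matrix over $\F_q$ if and only if its image $f(x)$ is a \emph{unit} in the ring $\F_q[x]/(x^z-1)$. Indeed, invertibility of $\bfA$ is precisely the existence of a circulant matrix $\bfB$ with $\bfA\bfB=\bfI$ (the inverse of a circulant is again circulant, a fact that follows directly from the ring isomorphism), and this translates to the existence of $g(x)$ with $f(x)g(x)\equiv 1 \pmod{x^z-1}$, i.e., $f(x)$ being a unit in the quotient ring.

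The final step reduces unit-status to the gcd condition. An element $f(x)$ is a unit in $\F_q[x]/(x^z-1)$ precisely when $\gcd(f(x),x^z-1)=1$: if the gcd is $1$, then B\'ezout's identity over the principal ideal domain $\F_q[x]$ gives polynomials $g,h$ with $fg+(x^z-1)h=1$, so $g$ is the desired inverse modulo $x^z-1$; conversely, if $fg\equiv 1\pmod{x^z-1}$, then any common factor of $f$ and $x^z-1$ would divide $1$, forcing the gcd to be $1$. Chaining these equivalences yields that $\bfA$ is non-singular if and only if $\gcd(f(x),x^z-1)=1$, as claimed.

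I expect the main conceptual point to be establishing the ring isomorphism cleanly and verifying that matrix invertibility matches unit-status in the quotient ring; the gcd characterization of units is then standard PID machinery. Since the isomorphism $\calC\cong\F_q[x]/(x^z-1)$ is a classical result (and is already implicitly invoked when the paper notes that $\calC$ is a commutative ring), this step can be cited rather than rederived, making the whole argument quite short.
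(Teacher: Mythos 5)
The paper never proves Lemma \ref{lemma:circ_root}; it imports the statement as a known result from the cited references \cite{Ingleton1956,Fabsic2017}, so there is no internal proof to compare yours against. Your argument is the standard self-contained derivation of this classical fact, and its overall structure is sound: the associated-polynomial map is a ring isomorphism $\calC \cong \F_q[x]/(x^z-1)$, non-singularity of $\bfA$ corresponds to $f(x)$ being a unit in the quotient, and the units of $\F_q[x]/(x^z-1)$ are exactly the residues coprime to $x^z-1$ by B\'ezout in the PID $\F_q[x]$.

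One step does need tightening. Your parenthetical claim that ``the inverse of a circulant is again circulant, a fact that follows directly from the ring isomorphism'' is not quite right as justified: the isomorphism $\calC \cong \F_q[x]/(x^z-1)$ only identifies units \emph{of $\calC$} with units of the quotient ring, whereas non-singularity of $\bfA$ is a priori invertibility in the full matrix ring $\F_q^{z,z}$, which strictly contains $\calC$. You need a short separate argument that a matrix inverse of a circulant must itself lie in $\calC$. Any of the following closes the loop: (i) circulants are exactly the matrices commuting with the shift matrix $\bfC$, and multiplying $\bfA\bfC = \bfC\bfA$ on both sides by $\bfA^{-1}$ shows $\bfA^{-1}$ commutes with $\bfC$ as well; (ii) by Cayley--Hamilton, $\bfA^{-1}$ is a polynomial in $\bfA$ over $\F_q$ (the constant term of the characteristic polynomial is nonzero when $\bfA$ is invertible), and $\calC$ is a subalgebra containing $\bfA$ and $\bfI$; (iii) $\calC$ is a finite commutative ring, so every non-unit of $\calC$ is a zero divisor in $\calC$ and hence singular as a matrix, giving the contrapositive. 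With any one of these inserted, your chain of equivalences is complete and the proof is correct.
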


	Using Lemmas \ref{lemma:det} and \ref{lemma:circ_root}, we arrive at the following theorem: 
	\begin{theorem}\label{theorem:sufficient_cond}
		A sufficient condition for a QC-LDPC code with parameters $(\gamma,\kappa,z)$ to be Block-MDS is that the scaling matrix $\bfS$ and power matrix $\bfP$ satisfy

		\begin{equation}\label{eq:suff_1a}
			\gcd(f_{\tau}(x), x^z-1) = 1,
		\end{equation}
		\begin{equation}\label{eq:suff_1b}
			f_{\tau}(x) = \sum_{\sigma \in S_{\gamma}}sign(\sigma)\left(\prod^{\gamma}_{i=1} s_{\sigma(i),\tau(i)}\right) x^{\left(\sum^{\gamma}_{i=1}p_{\sigma(i),\tau(i)}\right)_z},
		\end{equation}
		\begin{equation}\label{eq:suff_2}
			\sum^{\gamma}_{i=1}p_{\sigma(i),\tau(i)} \neq \sum^{\gamma}_{i=1}p_{\rho(i),\tau(i)},  \;  \forall  \rho,\sigma \in S_{\gamma}, \rho\neq\sigma,
		\end{equation}
		for all $\tau \in \nsubk{\kappa}{\gamma}$.
	\end{theorem}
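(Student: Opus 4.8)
The plan is to reduce the Block-MDS property to a statement about a single circulant matrix for each column selection, and then to invoke the two linear-algebra lemmas already established. By Definition \ref{def:block_mds}, the code is Block-MDS exactly when every lifted submatrix $\bfH_{\calS_{\calB}}$, for $\calB \in \nsubk{\kappa}{\gamma}$, is full rank. First I would fix such a $\calB$, regard its columns as an ordered tuple $\tau$, and observe that $\bfH_{\calS_{\calB}}$ is a $\gamma \times \gamma$ block matrix whose every block $s_{i,\tau(j)}\bfC^{p_{i,\tau(j)}}$ is circulant and hence lies in the commutative ring $\calC$. This is precisely the hypothesis of Lemma \ref{lemma:det}, so $\det_{\F_q}(\bfH_{\calS_{\calB}}) = \det_{\F_q}(\det_{\calC}(\bfH_{\calS_{\calB}}))$, and full rank over $\F_q$ becomes equivalent to the single circulant $\mathbf{M}_\tau := \det_{\calC}(\bfH_{\calS_{\calB}})$ being non-singular, which is exactly the condition recorded in Eq. \eqref{eq:det_condition}.

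Next I would compute $\mathbf{M}_\tau$ explicitly through the Leibniz formula over $\calC$. Using commutativity of $\calC$ and the identity $\bfC^{a}\bfC^{b} = \bfC^{(a+b)_z}$, each permutation $\sigma \in S_\gamma$ contributes $sign(\sigma)\left(\prod_{i=1}^{\gamma} s_{\sigma(i),\tau(i)}\right)\bfC^{(\sum_i p_{\sigma(i),\tau(i)})_z}$, so $\mathbf{M}_\tau$ is itself circulant, and its associated polynomial is read off by replacing each $\bfC^{(e)_z}$ with the monomial $x^{(e)_z}$ (since the first column of $\bfC^{p}$ has its single nonzero entry in row $(p)_z$). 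This produces exactly the polynomial $f_\tau$ of Eq. \eqref{eq:suff_1b}. Applying Lemma \ref{lemma:circ_root} to $\mathbf{M}_\tau$ then converts non-singularity into the coprimality condition $\gcd(f_\tau(x), x^z-1)=1$ of Eq. \eqref{eq:suff_1a}, and quantifying over all $\tau \in \nsubk{\kappa}{\gamma}$ recovers the full Block-MDS property.

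The step requiring the most care, and where condition \eqref{eq:suff_2} enters, is verifying that $f_\tau$ is genuinely the $\gamma!$-term polynomial displayed in Eq. \eqref{eq:suff_1b} and that the monomials contributed by distinct permutations do not coincide and silently cancel, which would otherwise annihilate coefficients and invalidate the gcd test. I would use the distinct-exponent hypothesis \eqref{eq:suff_2} to argue that each $\sigma$ occupies its own monomial, so that no coefficient $sign(\sigma)\prod_i s_{\sigma(i),\tau(i)}$ is cancelled by another permutation. The main obstacle is doing this cleanly while the integer exponents are reduced modulo $z$ in the exponents of $f_\tau$: it is exactly this interplay between the integer sums in \eqref{eq:suff_2} and their residues appearing in $f_\tau$ that must be handled to pin down the associated polynomial before the coprimality criterion of Lemma \ref{lemma:circ_root} is applied. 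Once the form of $f_\tau$ is established, the conclusion follows immediately from Lemmas \ref{lemma:det} and \ref{lemma:circ_root}.
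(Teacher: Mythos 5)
Your proposal follows essentially the same route as the paper's proof: reduce the Block-MDS property to non-singularity of a single circulant per column selection via Lemma \ref{lemma:det}, use condition \eqref{eq:suff_2} to guarantee that the $\gamma!$ permutation terms occupy distinct monomial positions so no cancellation occurs in $f_\tau$, and then apply Lemma \ref{lemma:circ_root}. The modulo-$z$ subtlety you flag (exponent sums must be distinct as residues, not merely as integers, for the monomials to stay separate) is real but is glossed over by the paper as well; your treatment is, if anything, more careful on this point.
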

	
	\begin{proof} 
		To simplify Eq. \eqref{eq:det_condition}, we can enforce that all circulant matrices in the inner sum (after performing the products) do not have any overlap in their non-zero positions. This ensures that each matrix contributes to only one coefficient in the associated polynomial of the summed up circulant matrix. 
		Eq. \eqref{eq:suff_2} accomplishes this by requiring that for a given $\tau$ all the matrix powers in that particular sum are distinct which ensures no overlap in the non-zero terms of the summed circulant matrix. 
		As such, the associated polynomial $f_\tau(x)$ for a given $\tau$ can be written as 
		Eq. \eqref{eq:suff_1b}. Applying Lemma \ref{lemma:circ_root} results in Eq. \eqref{eq:suff_1a} which completes the proof.
	\end{proof}

	At first glance, Theorem \ref{theorem:sufficient_cond} seems to provide a sufficient condition that is quite restrictive on the parameters due to Eq.\eqref{eq:suff_2}. In fact, the following example demonstrates that Theorem \ref{theorem:sufficient_cond} \lev{broadly} applies to QC-LDPC codes of high girth which are attractive for their error correcting performance. 
	\begin{example}\label{example:cycle_cond}
		Consider the QC-LDPC code in Example \ref{example:block_mds}. According to Theorem \ref{theorem:sufficient_cond}, the following equations are sufficient for this QC-LDPC code to be Block-MDS:
		\begin{align}
			&\gcd(s_{1,1}s_{2,2}x^{(p_{1,1}+p_{2,2})_z}-s_{2,1}s_{1,2}x^{(p_{2,1}+p_{1,2})_z}, x^z-1) = 1 \\
			&\gcd(s_{1,1}s_{2,3}x^{(p_{1,1}+p_{2,3})_z}-s_{2,1}s_{1,3}x^{(p_{2,1}+p_{1,3})_z}, x^z-1) = 1 \\
			&\gcd(s_{1,2}s_{2,3}x^{(p_{1,2}+p_{2,3})_z}-s_{2,2}s_{1,3}x^{(p_{2,2}+p_{1,3})_z}, x^z-1) = 1 \\
			&p_{1,2}+p_{2,3} \neq p_{2,2}+p_{1,3} \mod z \label{eq:c1}\\
			&p_{1,1}+p_{2,3} \neq p_{2,1}+p_{1,3} \mod z\label{eq:c2} \\
			&p_{1,2}+p_{2,3} \neq p_{2,2}+p_{1,3} \mod z\label{eq:c3}
		\end{align}
	\end{example}
	
	Note that Eqs.\eqref{eq:c1},\eqref{eq:c2},\eqref{eq:c3} are a subset of the cycle conditions in Lemma \ref{lemma:fossorier} to ensure that the QC-LDPC code has no cycles of length 4. In fact, we can see that Eq. \eqref{eq:suff_2} in Theorem \ref{theorem:sufficient_cond} is always a subset of the cycle conditions in Lemma \ref{lemma:fossorier} for containing no cycles of length $\gamma$. Thus, we  get the following corollary:
	\begin{corollary}\label{corollary:girth}
		A QC-LDPC code with column weight $\gamma$ and girth $2\gamma+2$ is Block-MDS if and only if it satisfies the equations in Theorem \ref{theorem:sufficient_cond}.
	\end{corollary}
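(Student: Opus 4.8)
The plan is to prove the two implications separately and to note that the girth hypothesis is needed for only one of them. The ``if'' direction is immediate: Theorem \ref{theorem:sufficient_cond} already states that \eqref{eq:suff_1a}, \eqref{eq:suff_1b}, \eqref{eq:suff_2} force the code to be Block-MDS, and its proof never uses the girth. Hence the real content of the corollary is the converse, and I would establish it by assuming the code is Block-MDS and has girth $2\gamma+2$ and then deriving the three equations.

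The heart of the argument is that girth $2\gamma+2$ by itself already yields \eqref{eq:suff_2}. Fix $\tau \in \nsubk{\kappa}{\gamma}$ and two distinct $\sigma,\rho \in S_\gamma$. After cancelling the positions on which $\sigma$ and $\rho$ agree, the quantity $\sum_{i=1}^{\gamma} p_{\sigma(i),\tau(i)} - \sum_{i=1}^{\gamma} p_{\rho(i),\tau(i)}$ is supported on the non-fixed points of $\rho^{-1}\sigma$, where $\sigma(i)\neq\rho(i)$. I would read the surviving terms as the shift sum of a closed alternating walk in the complete bipartite base graph $K_{\gamma,\kappa}$: each surviving index $i$ contributes $p_{\sigma(i),\tau(i)} - p_{\rho(i),\tau(i)}$, i.e. two edges into the same column block $\tau(i)$ from the distinct row blocks $\sigma(i)$ and $\rho(i)$. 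When $\rho^{-1}\sigma$ is a single $\ell$-cycle this walk is exactly a cycle of length $2\ell$ with $2\le \ell \le \gamma$ (consecutive rows differ, and consecutive columns differ since $\tau$ is injective), so its shift sum matches the left-hand side of \eqref{eq:girth_cond} for $m=\ell\le\gamma$. Since girth $2\gamma+2$ means $g=\gamma$ and Lemma \ref{lemma:fossorier} rules out every such cycle, the difference is nonzero modulo $z$, which is precisely \eqref{eq:suff_2}.

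With \eqref{eq:suff_2} secured, the remaining equations follow from the ring-determinant machinery. Because the circulants $\calC$ form a commutative ring, Lemma \ref{lemma:det} lets me replace the $\F_q$-rank condition on each block submatrix $\bfH_{\calS_{\calB}}$ of Definition \ref{def:block_mds} by the nonsingularity of the single circulant $\det_{\calC}(\bfH_{\calS_{\calB}})$, whose associated polynomial is exactly the $f_\tau(x)$ of \eqref{eq:suff_1b}; here \eqref{eq:suff_2} is what guarantees that the Leibniz terms occupy distinct powers of $x$, so that $f_\tau$ is read off without collisions. The Block-MDS hypothesis says each such circulant is nonsingular, so Lemma \ref{lemma:circ_root} gives $\gcd(f_\tau(x),x^z-1)=1$ for every $\tau\in\nsubk{\kappa}{\gamma}$, which is \eqref{eq:suff_1a}. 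Collecting the three equations completes the converse.

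The step I expect to be the main obstacle is the passage from a general pair $(\sigma,\rho)$ to a single cycle condition of Lemma \ref{lemma:fossorier}. The identification is clean exactly when $\rho^{-1}\sigma$ is one nontrivial cycle, and this already covers all pairs when $\gamma\le 3$, since $S_\gamma$ then contains no product of disjoint nontrivial cycles. For $\gamma\ge 4$, however, $\rho^{-1}\sigma$ may factor into several disjoint cycles, in which case the difference becomes a \emph{sum} of several shorter shift sums that are each nonzero by girth yet could in principle cancel in aggregate, so the bare girth hypothesis no longer obviously forces \eqref{eq:suff_2}. I would therefore either restrict attention to the regime in which only single cycles arise, or look for an auxiliary argument (or a strengthened girth requirement) guaranteeing that the aggregate stays nonzero. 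Making this multi-cycle bookkeeping airtight, and delimiting the exact range of $\gamma$ for which girth $2\gamma+2$ suffices, is where I anticipate the real difficulty.
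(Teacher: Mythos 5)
Your proposal follows the paper's own route exactly: the ``if'' direction is Theorem~\ref{theorem:sufficient_cond} verbatim, and the converse is obtained by arguing that girth $2\gamma+2$ forces Eq.~\eqref{eq:suff_2}, after which Lemma~\ref{lemma:det} and Lemma~\ref{lemma:circ_root} convert the Block-MDS property into Eq.~\eqref{eq:suff_1a}. Indeed, the paper's entire justification for the corollary is the remark that Eq.~\eqref{eq:suff_2} ``is always a subset of the cycle conditions in Lemma~\ref{lemma:fossorier},'' which is precisely your identification of the difference $\sum_{i=1}^{\gamma} p_{\sigma(i),\tau(i)} - \sum_{i=1}^{\gamma} p_{\rho(i),\tau(i)}$ with the shift sum of a length-$2\ell$ cycle when $\rho^{-1}\sigma$ is a single $\ell$-cycle.

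The obstacle you flag in your last paragraph is therefore not a defect of your write-up relative to the paper; it is a genuine gap in the paper's own claim. When $\rho^{-1}\sigma$ factors into two or more disjoint nontrivial cycles (possible exactly when $\gamma \ge 4$), the difference splits into a sum of shift sums of vertex-disjoint cycles in the base graph: the index sets of the cycles are disjoint, so the rows $\sigma(i_k)$ and columns $\tau(i_k)$ they touch are disjoint as well. A closed walk has connected support, so this disconnected collection of signed entries cannot be realized as any single condition of Lemma~\ref{lemma:fossorier} with $m \le \gamma$, and girth alone does not force the aggregate to be nonzero modulo $z$ (each summand nonzero does not prevent cancellation, e.g.\ values $a$ and $-a$). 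Consequently the ``only if'' direction is actually established only for $\gamma \le 3$, where every non-identity element of $S_\gamma$ is a single cycle; for $\gamma = 4, 5$ the corollary requires either an additional hypothesis (imposing Eq.~\eqref{eq:suff_2} on disjoint-cycle pairs directly) or a new argument, and for $\gamma \ge 6$ it is vacuous since type-I QC-LDPC codes cannot have girth exceeding $12$. By isolating the single-cycle case explicitly and delimiting where it suffices, your treatment is more careful than the paper's one-line justification.
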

	
	Thus, Theorem \ref{theorem:sufficient_cond} is sufficient to guarantee Block-MDS among high girth QC-LDPC codes which are the class of QC-LDPC codes that we generally focus on due to their higher error-correcting performance. We note that Corollary \ref{corollary:girth} becomes less meaningful for $\gamma \geq 6$ as it is well known that type-I QC-LDPC codes have a minimum girth of 12 \cite{Fossorier2004}. This is not a problematic constraint since many practical type-I QC-LDPC codes generally have $\gamma$ be 3 or 4. A future research direction is generalizing our result to more complex constructions of QC-LDPC codes that permit a higher girth.
	
	
	For special values of the lifting factor $z$, Theorem \ref{theorem:sufficient_cond} can also be used to derive a simpler condition that allows for decoupling the search for \lev{matrices} $\bfS$ and $\bfP$. The following theorem provides sufficient conditions where a high girth QC-LDPC code can be made into a Block-MDS code where the finite field size scales linearly with $\kappa$.
	
	\begin{theorem}\label{theorem:vand}
		If the lifting factor $z$ is an odd prime and the function $\sum_{i=0}^{z-1}x^i$ is irreducible in $\F_q$, then a QC-LDPC code with girth $2\gamma+2$ can be made into a Block-MDS code with a careful choice of $\bfS$ for all $\kappa \leq |\F_q|$ and $\gamma! < z$. 
	\end{theorem}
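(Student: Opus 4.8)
The plan is to verify the two sufficient conditions of Theorem \ref{theorem:sufficient_cond}, namely Eq.\eqref{eq:suff_1a}--\eqref{eq:suff_1b} and Eq.\eqref{eq:suff_2}, for a carefully chosen scaling matrix $\bfS$. The first observation I would make is that the girth hypothesis disposes of Eq.\eqref{eq:suff_2} for free: by the discussion leading to Corollary \ref{corollary:girth}, a code of girth $2\gamma+2$ automatically satisfies the power-sum distinctness condition, so for every $\tau \in \nsubk{\kappa}{\gamma}$ the exponents $\left(\sum_{i=1}^{\gamma} p_{\sigma(i),\tau(i)}\right)_z$ are pairwise distinct as $\sigma$ ranges over $S_\gamma$. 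Consequently $f_\tau(x)$ of Eq.\eqref{eq:suff_1b} is a polynomial of degree at most $z-1$ carrying at most $\gamma!$ nonzero monomials, since the distinct exponents prevent any two permutations from collapsing into the same monomial. This leaves only Eq.\eqref{eq:suff_1a} to be secured, and it is here that the hypotheses on $z$ enter.

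Next I would exploit the prescribed factorization of $x^z-1$. Since $z$ is an odd prime and $\Phi(x) \triangleq \sum_{i=0}^{z-1}x^i$ is irreducible over $\F_q$, we may write $x^z-1 = (x-1)\Phi(x)$ with $\Phi$ irreducible of degree $z-1$; irreducibility of $\Phi$ forces the characteristic of $\F_q$ to differ from $z$, whence $\Phi(1)=z\neq 0$ and the two factors are coprime. Therefore $\gcd(f_\tau(x),x^z-1)=1$ is equivalent to the conjunction of $(x-1)\nmid f_\tau(x)$ and $\Phi(x)\nmid f_\tau(x)$, which I would treat separately.

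The key step---and the one place where $\gamma! < z$ is used---is a sparsity argument ruling out $\Phi(x)\mid f_\tau(x)$. Since $\deg f_\tau \le z-1 = \deg \Phi$, any nonzero multiple of $\Phi$ of that degree is a scalar multiple $c\,\Phi$, whose $z$ coefficients are all nonzero; but $f_\tau$ has at most $\gamma! < z$ nonzero terms, so $\Phi \mid f_\tau$ can occur only when $f_\tau \equiv 0$. For the remaining factor, $(x-1)\nmid f_\tau$ is exactly $f_\tau(1)\neq 0$, and setting $x=1$ collapses the Leibniz sum in Eq.\eqref{eq:suff_1b} into $f_\tau(1) = \sum_{\sigma \in S_\gamma} sign(\sigma)\prod_{i=1}^{\gamma} s_{\sigma(i),\tau(i)} = \det(\bfS_\tau)$, the determinant of the $\gamma\times\gamma$ submatrix of $\bfS$ on columns $\tau$. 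Because $f_\tau(1)\neq 0$ already rules out $f_\tau\equiv 0$, both divisibility requirements hold precisely when $\det(\bfS_\tau)\neq 0$ for every $\tau$; in other words, Eq.\eqref{eq:suff_1a} holds for all $\tau$ if and only if $\bfS$ is MDS. This is the promised decoupling: $\bfP$ is constrained only through the girth, and $\bfS$ only through being MDS.

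To finish, I would exhibit an explicit MDS $\bfS$ with nonzero entries (so that the QC structure is preserved): a $\gamma\times\kappa$ Vandermonde matrix built from distinct evaluation points of $\F_q$ has every $\gamma\times\gamma$ minor equal to a nonzero Vandermonde determinant, and such a choice exists exactly when $\kappa \le |\F_q|$, yielding the claimed linear scaling of the field size. The main obstacle I anticipate is not the existence of $\bfS$ but pinning down the decoupling cleanly---in particular arguing that the distinct exponents from the girth keep the $\gamma!$ monomials from cancelling, and that the full support of $\Phi$ combined with $\gamma! < z$ eliminates the irreducible factor. Once those two facts are in place, the reduction to an MDS condition on $\bfS$ and the Vandermonde construction are routine.
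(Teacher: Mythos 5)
Your proposal is correct and follows essentially the same route as the paper: dispose of Eq.\eqref{eq:suff_2} via the girth, factor $x^z-1=(x-1)\sum_{i=0}^{z-1}x^i$, use the sparsity bound $\gamma!<z$ to rule out the irreducible factor, identify $f_\tau(1)$ with $\det(\bfS_\tau)$ so that Eq.\eqref{eq:suff_1a} reduces to $\bfS$ being MDS, and finish with a Vandermonde choice of $\bfS$. If anything, your write-up is slightly more careful than the paper's, since you explicitly handle scalar multiples $c\,\Phi(x)$ (rather than only $\Phi(x)$ itself) and note that irreducibility of $\Phi$ forces the two factors of $x^z-1$ to be coprime.
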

	
	\begin{proof}
		Let us consider Eq.\eqref{eq:suff_1a}. When $z$ is a prime, then we can easily factor $x^z-1$ into $(x-1)(\sum_{i=0}^{z-1}x^i)$. By the theorem statement, these are the irreducible factors of $x^z-1$. The left factor indicates that for the gcd to be $1$, then $1$ cannot be a root of $f_\tau(x)$, i.e., 
		\begin{equation}
			f_{\tau}(1) = \sum_{\sigma \in S_{\gamma}}sign(\sigma)\left(\prod^{\gamma}_{i=1} s_{\sigma(i),\tau(i)}\right) \neq 0 \lev{\; \in \F_q}.
		\end{equation}
		Note that $f_{\tau}(1)$ is simply the determinant of the $\gamma \times \gamma$ submatrix of $\bfS$ where the columns are selected by $\tau$. Since this \lev{condition} needs to be true for every choice of $\tau$, then $\bfS$ must be an MDS matrix. Now, we only need to prove that $f_{\tau}(x)$ is not a factor of $\sum_{i=0}^{z-1}x^i$ since the degree of $f_{\tau}(x)$ is less than or equal to $z-1$. Since $\sum_{i=0}^{z-1}x^i$ is irreducible, we only need to show that $\sum_{i=0}^{z-1}x^i \neq f_{\tau}(x)$. This is true by noting that the number of non-zero elements in the polynomial $f_{\tau}(x)$ is upper bounded by $\gamma!$ which is less than $z$ by the theorem statement. Hence, Eq.\eqref{eq:suff_1a} is equivalent to requiring that $\bfS$ is an MDS matrix. 
		
		We complete the proof by using the well-known Vandermonde matrix of size $\gamma \times \kappa$ for $\bfS$ since it is MDS and it only needs a field size of $\kappa\leq |\F_q|$ \cite{Klinger1967}.
	\end{proof}
	Theorem \ref{theorem:vand} allows us to decouple the constructions of \lev{matrices} $\bfP$ and $\bfS$. Thus, we can first find a \lev{matrix} $\bfP$ with sufficient girth properties and then transform it using an easily defined \lev{matrix} $\bfS$ where the finite field size scales linearly with the row weight. This \lev{property} is very useful in practice since large finite field sizes incur significant complexity in decoding which translates to higher latency or more complex circuitry. Our design allows for Block-MDS QC-LDPC codes that are almost independent of the block length since the field size depends on $\kappa$ for lifting factors that satisfy Theorem \ref{theorem:vand}.
	
	
	
	\begin{table}[]
		\caption{Parameters for Codes used in Simulations. All lifting factors $z$ were chosen to acquire codes close to length $2000$ for fair comparison while satisfying the conditions in Theorem \ref{theorem:vand}.}\label{table:params}
		\vspace{-0.1cm}
		\begin{center}
			\begin{tabular}{l|cccc}
				Code & $(\gamma,\kappa)$      & Lifting Factor & Rate & Length \\ \hline
				$C_1$ & (3,4) & 491            & 1/4  & 1964   \\
				$C_2$ & (3,5) & 389            & 2/5  & 1945   \\
				$C_3$ & (4,5) & 389            & 1/5  & 1945  
			\end{tabular}
			\vspace{-0.75cm}
		\end{center}
	\end{table}
	
	\begin{table}[]
		\caption{Secret Key Rates at representative points for high noise regime.}\label{table:skr}
		\vspace{-0.1cm}
		\begin{center}
			\begin{tabular}{l|ccc}
				Code & Transition Probability & FC SKR     & MSC SKR  \\ \hline
				$C_1$ & $p=0.275$ & 0.3913  & 0.4832  \\
				$C_2$ & $p=0.2$  & 0.8883  & 0.9679  \\
				$C_3$ &  $p=0.28$ & 0.4114          & 0.45  
			\end{tabular}
			\vspace{-0.5cm}
		\end{center}
	\end{table}
	\begin{figure}[t]
		\centering
		\includegraphics[width=0.6\linewidth]{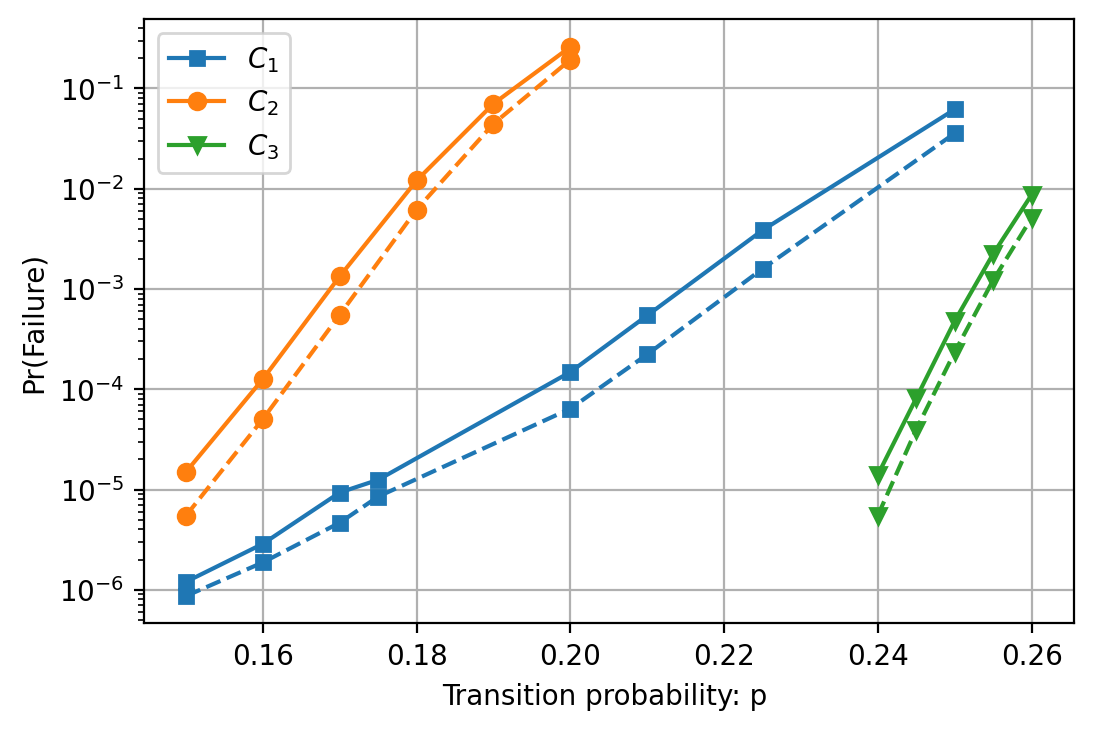}
		\caption{Probability of IR failure for different transition probabilities for a 8-ary symmetric channel. Bold line indicates the FC decoder and dotted lines indicates the MSC decoder. }\label{fig:sims}
		\vspace{-0.6cm}
	\end{figure}

	\section{Simulations and Conclusion} \label{sec:sim}
	\vspace{-0.2cm}
	
	In this section, we shall demonstrate the benefits of using our new decoding method to jointly perform information reconciliation and privacy amplification on our Block-MDS QC-LDPC codes. We shall be comparing the secret key rate using FC and MSC decoding on our Block-MDS QC-LDPC codes to demonstrate the gains offered by the relaxation of the IR step.
	Since the final key length for a code is the same regardless of the decoder chosen (FC or MSC), the major measure of interest is the IR failure probability for the secret key rate. As such, we shall demonstrate the improvements that the MSC decoder has over the FC decoder in terms of the IR failure probability for the low noise regime and the secret key rate at the high noise regime.

	We perform simulations on $3$ QC-LDPC codes with parameters described in Table \ref{table:params}. 
	All codes were constructed to have girth $10$. The power matrix $\bfP$ and scaling matrix $\bfS$ for each code can be found in Appendix \ref{app:code}. Fig. \ref{fig:sims} plots the probability of IR failure for different values of the transition probability for an $8$-ary symmetric channel. We see that the MSC decoder can improve the  IR failure probability by about $0.25$ orders of magnitude. Clearly, the gains differ for different code parameters which suggests further study into how code parameters affect the decoding probability of the MSC decoder.  Yet, we can say that the MSC decoder can provide significant gains. Additionally, Table \ref{table:skr} demonstrates the improvement in the secret key rate at the high noise regime which is commonly found in practice. In this regime, even a small improvement in the FER can have significant gains in the secret key rate as demonstrated by the MSC decoder. 
	
	
	In conclusion, we have demonstrated a powerful relaxation for the IR step in QKD, thus allowing us to improve the success rate of the IR step. This relaxation comes from a novel sampling technique between the IR and PA step. Additionally, we provide a novel LDPC code design in the form of Block-MDS QC-LDPC codes that can capitalize on this relaxation. We empirically demonstrate the improvements of our new decoder on these LDPC codes through simulations. Future work is focused on generalizing our ideas to \lev{a broader set of graph codes}.

	\newpage
	\bibliographystyle{ieeetr}
	\bibliography{references_qkd}
	
	
	\newpage
	\appendix 
	
	\subsection{Code Parameters} \label{app:code}
	Since all elements of $\bfS$ are in $\F_8$, we provide the binary representation of each element for $\bfS$.
	
	Code 1:
	\begin{align}
		&\bfP = \begin{bmatrix}
			0&0&0&0 \\
			0&1&11&26 \\
			0&18&4&6  \\
		\end{bmatrix} \\
		&\bfS = \begin{bmatrix}
			1&1&1&1 \\
			1&2&3&4 \\
			1&4&5&6 \\
		\end{bmatrix}
	\end{align}
	Code 2:
	\begin{align}
		&\bfP = \begin{bmatrix}
			0&0&0&0&0 \\
			0&1&13&3&24 \\
			0&37&75&22&8 \\
		\end{bmatrix} \\
		&\bfS = \begin{bmatrix}
			1&1&1&1&1 \\
			1&2&3&4&5 \\
			1&4&5&6&7 \\
		\end{bmatrix}
	\end{align}
	Code 3:
	\begin{align}
		&\bfP = \begin{bmatrix}
			0&0&0&0&0 \\
			0&9&2&29&76 \\
			0&120&19&6&161 \\
			0&43&109&158&12 \\
		\end{bmatrix} \\
		&\bfS = \begin{bmatrix}
			1&1&1&1&1 \\
			1&2&3&4&5 \\
			1&4&5&6&7 \\
			1&3&4&5&6 \\
		\end{bmatrix}
	\end{align}
	
	
	
	

\end{document}